\newcommand{\PIL}{\textsf{PIL}}
\newcommand{\K}{\ensuremath{\textit{K}}}
\newcommand{\Kv}{\ensuremath{\textit{Kv}}}
\newcommand{\C}{\mathbb{C}}
\newcommand{\D}{\mathcal{D}}
\newcommand{\M}{\mathcal{M}}
\newcommand{\SPIL}{\mathbb{SPIL}}
\newcommand{\TAUT}{{\texttt{TAUT}}}
\newcommand{\DIST}{{\texttt{DIST}}}
\newcommand{\LS}{{\texttt{LEARN}}}
\newcommand{\NF}{{\texttt{NF}}}
\newcommand{\tr}[1]{\text{#1}}
\newcommand{\COM}{{\texttt{COMM}}}
\newcommand{\IR}{{\texttt{IR}}}
\newcommand{\RIR}{{\texttt{RIR}}}
\newcommand{\DET}{\ensuremath{\texttt{DET}}}
\newcommand{\MP}{{\texttt{MP}}}
\newcommand{\NEC}{{\texttt{NEC}}}
\newcommand{\Kf}{\ensuremath{\mathcal{K}\!f}}
\begin{document}

\title{Knowing Values and Public Inspection}
\titlerunning{Knowing Values and Public Inspection}
\author{Jan van Eijck\inst{1,2},
Malvin Gattinger\inst{1},
Yanjing Wang\inst{3}}
\tocauthor{van Eijck et al.}
\institute{ILLC, University of Amsterdam, Amsterdam, The Netherlands \and
SEN1, CWI, Amsterdam, The Netherlands \and
Department of Philosophy, Peking University, Beijing, China}
\maketitle

\begin{abstract}
We present a basic dynamic epistemic logic of ``knowing the value''.
Analogous to public announcement in standard DEL, we study ``public
inspection'', a new dynamic operator which updates the agents'
knowledge about the values of constants. We provide a sound and
strongly complete axiomatization for the single and multi-agent
case, making use of the well-known Armstrong axioms for dependencies
in databases.
\keywords{Knowing what, Bisimulation, Public Announcement Logic.}
\end{abstract}

\section{Introduction}

Standard epistemic logic studies propositional knowledge expressed by
``knowing that''. However, in everyday life we talk about knowledge
in many other ways, such as ``knowing what the password  is'',
``knowing how to swim'', ``knowing why he was late'' and so on.
Recently the epistemic logics of such expressions are drawing more
and more attention (see \cite{Wang16} for a survey).

Merely reasoning about static knowledge is important but it is also
interesting to study the changes of knowledge. Dynamic Epistemic
Logic (DEL) is an important tool for this, which handles how
knowledge (and belief) is updated by events or actions
\cite{DitHoekKooi2007:del}.
For example, extending standard epistemic logic, one can update the
propositional knowledge of agents by making propositional
announcements. They are nicely studied by public announcement logic
\cite{Plaza2007:LoPC} which includes reduction axioms to completely
describe the interplay of ``knowing that'' and ``announcing that''.
Given this, we can also ask: What are natural dynamic counterparts
the knowledge expressed by other expressions such as knowing what,
knowing how etc.? How do we formalize ``announcing what''?

In this paper, we study a basic dynamic operation which updates the
knowledge of the values of certain constants.\footnote{In this paper,
by \textit{constant} we mean something which has a single value given
the actual situation. The range of possible values of a constant may
be infinite. This terminology is motivated by first-order modal logic
as it will become more clear later.}
The action of \textit{public inspection} is the knowing value
counterpart of public announcement and we will see that it fits well
with the logic of knowing value.
As an example, we may use a sensor to measure the current temperature
of the room. It is reasonable to say that after using the sensor you
will know the temperature of the room. Note that it is not reasonable
to encode this by standard public announcement since it may result in
a possibly infinite formula:
  $[t=\SI{27.1}{\degreeCelsius}]\K(t=\SI{27.1}{\degreeCelsius})\land [t=\SI{27.2}{\degreeCelsius}]\K(t=\SI{27.2}{\degreeCelsius})\land \dots$,
and the inspection action itself may require an infinite action model
in the standard DEL framework of \cite{BalMosSol98:tlopa} with a
separate event for each possible value.
Hence public inspection can be viewed as a public announcement of the
actual value, but new techniques are required to express it formally.
In our simple framework we define knowing and inspecting values as
primitive operators, leaving the actual values out of our logical
language.

The notions of knowing and inspecting values have a natural
connection with dependencies in databases. This will also play a
crucial role in the later technical development of the paper.
In particular, our completeness proofs employ the famous set of
axioms from \cite{armstrong1974dependency}.
For now, consider the following example.

\begin{example}
Suppose a university course was evaluated using anonymous
questionnaires which besides an assessment for the teacher also
asked the students for their main subject.
See Table \ref{table:ExampleEvaluation} for the results.
Now suppose a student tells you, the teacher, that his major is
Computer Science. Then clearly you know how that student assessed
the course, since there is some dependency between the two columns.
More precisely, in the cases of students 3 and 4, telling you the
value of ``Subject'' effectively also tells you the value of
``Assessment''. In practice, a better questionnaire would only ask
for combinations of questions that do not allow the identification
of students.
\begin{table}
\centering
\vspace{-1em}
\begin{tabular}{llll}
Student & Subject & Assessment \\
\hline
1 & Mathematics & good \\
2 & Mathematics & very good \\
3 & Logic       & good \\
4 & Computer Science & bad \\
\end{tabular}
\caption{Evaluation Results}
\label{table:ExampleEvaluation}
\vspace{-4em}
\end{table}
\end{example}

Other examples abound: The author of \cite{Sweeney2015:oy} gives
an account of how easily so-called `de-identified data' produced
from medical records could be `re-identified', by matching patient
names to publicly available health data.

These examples illustrate that reasoning about knowledge of values in
isolation, i.e.~separated from knowledge \emph{that}, is both possible
and informative.  It is such knowledge and its dynamics that we will
study here.

\section{Existing Work}

Our work relates to a collection of papers on epistemic logics with
other operators than the standard ``knowing that'' $\K\varphi$. In
particular we are interested in the $\Kv$ operator expressing that
an agent knows a value of a variable or constant. This operator is
already mentioned in the seminal work \cite{Plaza2007:LoPC} which
introduced public announcement logic (PAL). However, a complete
axiomatization of PAL together with $\Kv$ was only given in
\cite{WangFan2013KvPAL,WangFan2014CondKWhat} using the relativized
operator $\Kv(\varphi,c)$ for the single and multi-agent cases.
Moreover, it has been shown in \cite{GuWang2016KvNormal} that by
treating the negation of $\Kv$ as a primitive diamond-like operator,
the logic can be seen as a normal modal logic in disguise with
binary modalities.

Inspired by a talk partly based on an earlier version of this paper,
Baltag proposed the very expressive Logic of Epistemic Dependency
(LED) \cite{Baltag2016:KVV}, where knowing that, knowing value,
announcing that, announcing value can all be encoded in a general
language which also includes equalities like $c=4$ to facilitate
the axiomatization.

In this paper we go in the other direction: Instead of extending the
standard PAL framework with $\Kv$, we study it in isolation together
with its dynamic counterpart $[c]$ for public inspection.
In general, the motto of our work here is to see how far one can get
in formalizing knowledge and inspection of values without going all
the way to or even beyond PAL.
In particular we do not include values in the syntax and we do not
have any nested epistemic modalities.

As one would expect, our simple language is accompanied by simpler
models and also the proofs are less complicated than existing methods.
Still we consider our Public Inspection Logic (PIL) more than a toy
logic. Our completeness proof includes a novel construction which we
call ``canonical dependency graph'' (Definition \ref{def:canonical-g-and-m}).
We also establish the precise connection between our axioms and the
Armstrong axioms widely used in database theory \cite{armstrong1974dependency}.

Table \ref{table:LanguageComparison} shows how PIL fits into the
family of existing languages. Note that \cite{Baltag2016:KVV} is
the most expressive language in which all operators are encoded
using $\K_i^{t_1, \dots, t_n}t$ which expresses that given the
current values of $t_1$ to $t_n$, agent $i$ knows the value of $t$.
Moreover, to obtain a complete proof system for LED one also needs
to include equality and rigid constants in the language. It is thus
an open question to find axiomatizations for a language between PIL
and LED without equality.

\begin{table}
\[ \arraycolsep=5pt
\begin{array}{l l l l l l l l l}
\text{PAL}       & p & \K \varphi &        &             &          & [!\varphi] \varphi &
  & \text{\cite{Plaza2007:LoPC}} \\
\text{PAL}+\Kv   & p & \K \varphi & \Kv(c) &             &          & [!\varphi] \varphi &
& \text{\cite{Plaza2007:LoPC}}\\
\text{PAL}+\Kv^r & p & \K \varphi & \Kv(c) & \Kv(\varphi,c) &          & [!\varphi] \varphi &
  & \text{\cite{WangFan2013KvPAL,WangFan2014CondKWhat,GuWang2016KvNormal}}\\
\text{PIL}       &   &         & \Kv(c) &             & [c] \varphi & &
  & \text{this paper}\\
\text{PIL}+\K    &   & \K \varphi & \Kv(c) &             & [c] \varphi & &
  & \text{future work}\\
\text{LED}       & p & \K \varphi & \Kv(c) & \Kv(\varphi,c) & [c] \varphi & [!\varphi] \varphi & c=c
  & \text{\cite{Baltag2016:KVV} }\\
\end{array} \]
\caption{Comparison of Languages}
\label{table:LanguageComparison}
\end{table}

All languages include the standard boolean operators $\top$, $\lnot$
and $\land$ which we do not list in Table \ref{table:LanguageComparison}.

We also discuss other related works not in this line at the end
of the paper.

\section{Single-Agent PIL}

We first consider a simple single-agent language to talk about
knowing and inspecting values. Throughout the paper we assume
a fixed set of constants $\C$.

\begin{definition}[Syntax]
Let $c$ range over $\C$.
The language $\mathcal{L}_1$ is given by:
\[ \varphi ::= \top  \mid  \lnot\varphi \mid \varphi\land\varphi \mid \Kv(c) \mid [c]\varphi \]
\end{definition}

Besides standard interpretations of the boolean connectives, the
intended meanings are as follows:
$\Kv(c)$ reads ``the agent knows the value of $c$'' and
the formula $[c]\varphi$ is meant to say ``after revealing the actual
value of $c$, $\varphi$ is the case''.
We also use the standard abbreviations
  $\varphi\lor\psi := \lnot(\lnot\varphi\land\lnot\psi)$ and
  $\varphi\to\psi := \lnot\varphi\lor\psi$.

\begin{definition}[Models and Semantics]\label{def:PIL-models-and-semantics}
A model for $\mathcal{L}_1$ is a tuple $\M = \langle S, \D, V \rangle$ where
$S$ is a non-empty set of worlds (also called states),
$\D$ is a non-empty domain and
$V$ is a valuation
$V : (S \times \C) \to \D$.
To denote $V(s,c)=V(t,c)$, i.e.~that $c$ has the same value at $s$
and $t$ according to $V$, we write $s =_c t$.  If this holds for
all $c \in C\subseteq \C$ we write $s =_C t$.
The semantics are as follows:
\[ \begin{array}{lll} \hline
\M,s\vDash \top  &  &   \textrm{always}\\
\M,s\vDash \neg\varphi &\Leftrightarrow& \M,s\nvDash \varphi \\
\M,s\vDash \varphi\land \psi &\Leftrightarrow&\M,s\vDash \varphi \textrm{ and } \M,s\vDash \psi \\
\M,s\vDash \Kv(c)&\Leftrightarrow&\text{for all } t \in S : s =_c t\\
\M,s\vDash [c]\varphi & \Leftrightarrow & \M|^s_c,s\vDash\varphi\\
\hline \end{array} \]
where $\M|^s_c$ is $\langle S', \D, V|_{S' \times \C} \rangle$ with $S'=\{t \in S \mid s =_c t \}$.
If for a set of formulas $\Gamma$ and a formula $\varphi$ we have that
whenever a model $\M$ and a state $s$ satisfy $\M,s \vDash \Gamma$
then they also satisfy $\M,s \vDash \varphi$, then we say that $\varphi$
follows semantically from $\Gamma$ and write $\Gamma \vDash \varphi$.
If this hold for $\Gamma = \varnothing$ we say that $\varphi$
is semantically valid and write $\vDash \varphi$.
\end{definition}

Note that the actual state $s$ plays an important role in the last
clause of our semantics: Public inspection of $c$ at $s$ reveals the
\emph{local actual} value of $c$ to the agent.
The model is restricted to those worlds which agree on $c$ with $s$.
This is different from PAL and other DEL variants based on action
models, where updates are usually defined on models directly and not
on pointed models.

We employ the usual abbreviation $\langle c \rangle \varphi$ as $\neg
[c]\neg \varphi$. Note however, that public inspection of $c$ can always
take place and is deterministic. Hence the determinacy axiom $\langle
c \rangle \varphi \leftrightarrow [c] \varphi$ is semantically valid and we
include it in the following system.

\begin{definition}
The proof system $\SPIL_1$ for $\PIL$ in the language
$\mathcal{L}_1$ consists of the following axiom schemata and rules.
If a formula $\varphi$ is provable from a set of premises $\Gamma$ we
write $\Gamma \vdash \varphi$. If this holds for $\Gamma = \varnothing$
we also write $\vdash \varphi$.

\begin{minipage}[t][][b]{0.6\textwidth}
\begin{center}
\begin{tabular}{lc}
\multicolumn{2}{l}{\textbf{Axiom Schemata}}\\[0.4em]
\TAUT & \tr{all instances of propositional tautologies}\\
\DIST & $[c](\varphi \rightarrow \psi) \rightarrow ([c]\varphi \rightarrow [c]\psi)$\\
\LS   & $[c]\Kv(c)$\\
\NF   & $\Kv(c) \to [d]\Kv(c)$\\
\DET  & $\langle c\rangle\varphi \leftrightarrow [c]\varphi$\\
\COM  & $[c][d]\varphi \leftrightarrow [d][c]\varphi$\\
\IR   & $\Kv(c) \rightarrow ([c]\varphi \to \varphi)$\\
\textbf{ }
\end{tabular}
\end{center}
\end{minipage}
\hspace{1em}
\begin{minipage}[t][][b]{0.2\textwidth}
\begin{center}
\begin{tabular}{lc}
\multicolumn{2}{l}{\textbf{Rules}}\\[0.4em]
\MP& $\dfrac{\varphi,\varphi\to\psi}{\psi}$\\
\\
\NEC &$\dfrac{\varphi}{[c]\varphi}$
\end{tabular}
\end{center}
\end{minipage}
\vspace{-1em}
\end{definition}

Intuitively, \LS\ captures the effect of the inspection;
\NF\ says that the agent does not forget;
\DET\ says that inspection is deterministic;
\COM\ says that inspections commute;
finally, \IR\ expresses that inspection does not bring any new
information if the value is known already.
Note that \DET\ says that $[c]$ is a function. It also implies
seriality which we list in the following Lemma.

\begin{lemma}\label{lem:provable}
The following schemes are provable in $\SPIL_1$:
\begin{itemize}
  \item $\langle c \rangle \top$ (seriality)
  \item $\Kv(c) \rightarrow (\varphi \to [c]\varphi)$ (\IR')
  \item $[c](\varphi \land \psi) \leftrightarrow [c]\varphi \land [c]\psi$ (\DIST')
  \item $[c_1]\dots[c_n](\varphi \to \psi) \to ([c_1]\dots[c_n]\varphi \to [c_1]\dots[c_n]\psi)$ (multi-\DIST)
  \item $[c_1]\dots[c_n](\varphi \land \psi) \leftrightarrow [c_1]\dots[c_n]\varphi \land [c_1]\dots[c_n]\psi$ (multi-\DIST')
  \item $[c_1]\dots[c_n](\Kv(c_1) \land \dots \Kv(c_n))$ (multi-\LS)
  \item $(\Kv(c_1) \land \dots \land \Kv(c_n)) \rightarrow [d_1]\dots[d_n] (\Kv(c_1) \land \dots \land \Kv(c_n))$ (multi-\NF)
  \item $(\Kv(c_1) \land \dots \land \Kv(c_n)) \rightarrow ([c_1]\dots[c_n]\varphi \to \varphi)$ (multi-\IR)
\end{itemize}
Moreover, the multi-\NEC\ rule is admissible:
If $\vdash \varphi$, then $\vdash [c_1]\dots[c_n]\varphi$.
\end{lemma}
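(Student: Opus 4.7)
The plan is to derive all items by short, essentially standard, modal-logic manipulations built on top of the fact that \NEC\ and \DIST\ make each $[c]$ a normal modality; the only nonroutine ingredients are \DET\ (for \IR') and the combination of \COM\ with \NF\ (for pushing $\Kv(c_i)$ through nested boxes). I would do the items in the order listed, since later ones reuse earlier ones.

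For seriality I start from the propositional tautology $\top$, apply \NEC\ to obtain $[c]\top$, and then \DET\ gives $\langle c\rangle\top$. For \IR', I apply \IR\ to $\neg\varphi$ in place of $\varphi$, giving $\Kv(c)\to([c]\neg\varphi\to\neg\varphi)$; contraposing the inner implication yields $\Kv(c)\to(\varphi\to\neg[c]\neg\varphi)$, and \DET\ rewrites $\neg[c]\neg\varphi$ as $[c]\varphi$. The scheme \DIST'\ is the standard normal-modal derivation: $(\varphi\land\psi)\to\varphi$ and $(\varphi\land\psi)\to\psi$ go under \NEC\ and \DIST\ to give both directions of one half, and for the other half $\varphi\to(\psi\to\varphi\land\psi)$ goes under \NEC\ and \DIST\ twice.

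The multi-$n$ variants are then handled by induction on $n$. Multi-\DIST\ and multi-\DIST'\ are straightforward: the induction step wraps one more $[c_i]$ around the inductive hypothesis by \NEC\ and then applies \DIST\ (respectively \DIST'). Multi-\NF\ is obtained by iterating \NF: from $\Kv(c_i)\to[d_n]\Kv(c_i)$ one derives, via \NEC\ and multi-\DIST, $\Kv(c_i)\to[d_1]\dots[d_n]\Kv(c_i)$, and then conjoins over $i$. Multi-\IR\ is proved by $n$-fold application of \IR: at the $k$-th peeling one needs $\Kv(c_{k+1})\land\dots\land\Kv(c_n)$ to still hold after the inspections $[c_1]\dots[c_k]$ that have already been stripped, which is exactly an instance of multi-\NF. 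Finally, multi-\NEC\ is an immediate induction, each step being a single application of \NEC.

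The one step that requires a bit of care is multi-\LS. Using multi-\DIST'\ it reduces to proving $[c_1]\dots[c_n]\Kv(c_j)$ for each $j$. Here I use \COM\ to commute $[c_j]$ to the front, getting an equivalent $[c_j][c_1]\dots[\widehat{c_j}]\dots[c_n]\Kv(c_j)$; by \LS\ we have $[c_j]\Kv(c_j)$, and by the (already established) multi-\NF\ reasoning applied under the outer $[c_j]$ via \NEC\ and \DIST, the inner $\Kv(c_j)$ survives the remaining inspections. This interplay of \COM\ and \NF\ under \NEC\ is the one spot where the derivation is not purely mechanical, and so I expect multi-\LS\ to be the main (though still modest) obstacle; everything else is normal-modal bookkeeping.
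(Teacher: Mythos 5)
Your proposal is correct and takes essentially the same approach as the paper: the paper likewise derives \IR' from \IR, \DET\ and contraposition, obtains the multi-versions by iterating \NEC\ and \DIST, and proves multi-\LS\ from \LS, \NEC\ and \COM\ (it only writes out \IR', the $n=2$ case of multi-\DIST, and multi-\LS, leaving the rest as exercises). The one minor divergence is multi-\LS, where the paper just prefixes $[c_j]\Kv(c_j)$ with the remaining boxes via \NEC\ and reorders with \COM, so your detour through \NF\ is sound but unnecessary; similarly, multi-\IR\ also follows by propositionally chaining the top-level instances $\Kv(c_k) \to ([c_k][c_{k+1}]\dots[c_n]\varphi \to [c_{k+1}]\dots[c_n]\varphi)$ of \IR, without invoking \NF.
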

\begin{proof}
For reasons of space we only prove three of the items and leave the
others as an exercise for the reader.
For \IR', we use \DET\ and \TAUT:
\begin{prooftree}
  \AxiomC{}
  \RightLabel{(\IR)}
  \UnaryInfC{$\Kv(c) \rightarrow ([c]\lnot\varphi \to \lnot\varphi)$}
  \RightLabel{(\DET)}
  \UnaryInfC{$\Kv(c) \rightarrow (\lnot[c]\varphi \to \lnot\varphi)$}
  \RightLabel{(\TAUT)}
  \UnaryInfC{$\Kv(c) \rightarrow (\varphi \to [c]\varphi)$}
\end{prooftree}
To show multi-\NEC, we use \DIST, \NEC\  and \TAUT.
For simplicity, consider the case where $C=\{c_1, c_2\}$.
\begin{prooftree}
  \AxiomC{}
  \RightLabel{(\DIST)}
  \UnaryInfC{$[c_2](\varphi \to \psi) \to ([c_2]\varphi \to [c_2]\psi)$}
  \RightLabel{(\NEC)}
  \UnaryInfC{$[c_1]([c_2](\varphi \to \psi) \to ([c_2]\varphi \to [c_2]\psi))$}
  \RightLabel{(\DIST, \TAUT)}
  \UnaryInfC{$[c_1][c_2](\varphi \to \psi) \to [c_1]([c_2]\varphi \to [c_2]\psi)$}
  \RightLabel{(\DIST, \TAUT)}
  \UnaryInfC{$[c_1][c_2](\varphi \to \psi) \to ([c_1][c_2]\varphi \to [c_1][c_2]\psi)$}
\end{prooftree}
For multi-\LS, we use \LS, \NEC, \COM, \DIST' and \TAUT:
\begin{prooftree}
  \AxiomC{}
  \RightLabel{(\LS)}
  \UnaryInfC{$[c_1]\Kv(c_1)$}
  \RightLabel{(\NEC)}
  \UnaryInfC{$[c_2][c_1]\Kv(c_1)$}
  \RightLabel{(\COM)}
  \UnaryInfC{$[c_1][c_2]\Kv(c_1)$}
          \AxiomC{}
          \RightLabel{(\LS)}
          \UnaryInfC{$[c_2]\Kv(c_2)$}
          \RightLabel{(\NEC)}
          \UnaryInfC{$[c_1][c_2]\Kv(c_2)$}
      \RightLabel{(\DIST', \TAUT)}
      \BinaryInfC{$[c_1]([c_2]\Kv(c_1) \land [c_2]\Kv(c_2))$}
      \RightLabel{(\DIST', \TAUT)}
      \UnaryInfC{$[c_1][c_2](\Kv(c_1) \land \Kv(c_2))$}
\end{prooftree}
\end{proof}

\begin{definition}\label{def:abbreviations}
We use the following abbreviations for any two finite sets of
constants $C=\{c_1,\dots,c_m\}$ and $D=\{d_1,\dots,d_n\}$.
\begin{itemize}
  \item $\Kv(C) := \Kv(c_1) \land \dots \land \Kv(c_m)$
  \item $[C]\varphi := [c_1]\dots [c_m]\varphi$
  \item $\Kv(C,D) := [C]\Kv(D)$.
\end{itemize}
\end{definition}

Note that by multi-\DIST' and \COM\, the exact enumeration of $C$
and $D$ in Definition \ref{def:abbreviations} do not matter modulo
logical equivalence.

In particular, these abbreviations allow us to shorten the ``multi''
items from Lemma \ref{lem:provable} to
  $\Kv(C,C)$,
  $\Kv(C) \to \Kv(D,C)$ and
  $\Kv(C) \to ([C]\varphi \to \varphi)$.
The abbreviation $\Kv(C,D)$ allows us to define dependencies and
it will be crucial in our completeness proof.
We have that:
\[ \begin{array}{c} \hline
\M,s\vDash \Kv(C,D) \Leftrightarrow \text{for all } t \in S :
  \text{if } s =_C t \text{ then } s =_D t \\
\hline \end{array} \]

\begin{definition}
Let $\mathcal{L}_2$ be the language given by
$\varphi ::= \top \mid \lnot\varphi \mid \varphi\land\varphi \mid \Kv(C,C)$.
\end{definition}

Note that this language is essentially a fragment of $\mathcal{L}_1$
due to the above abbreviation, where (possibly multiple) $[c]$
operators only occur in front of $\Kv$ operators (or conjunctions
thereof). Moreover, the next Lemma might count as a small surprise.

\begin{lemma}\label{lemma:equiexpressive}
$\mathcal{L}_1$ and $\mathcal{L}_2$ are equally expressive.
\end{lemma}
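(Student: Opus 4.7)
The inclusion $\mathcal{L}_2 \leq \mathcal{L}_1$ is immediate from Definition \ref{def:abbreviations}: every $\mathcal{L}_2$ formula is literally an $\mathcal{L}_1$ formula once the abbreviation $\Kv(C,D) = [C]\Kv(D)$ (with $\Kv(D)$ a conjunction of $\Kv(d_i)$) is unfolded.

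For the converse, the plan is to define a translation $t : \mathcal{L}_1 \to \mathcal{L}_2$ by nested structural recursion. The outer recursion on $\varphi \in \mathcal{L}_1$ acts homomorphically on $\top,\lnot,\land$, sets $t(\Kv(c)) := \Kv(\varnothing,\{c\})$, and for the inspection case first computes $\psi := t(\varphi) \in \mathcal{L}_2$ and then applies an inner ``push'' function $p_c : \mathcal{L}_2 \to \mathcal{L}_2$ defined by $p_c(\top) := \top$, $p_c(\lnot\psi) := \lnot p_c(\psi)$, $p_c(\psi_1 \land \psi_2) := p_c(\psi_1) \land p_c(\psi_2)$, and $p_c(\Kv(C,D)) := \Kv(C \cup \{c\}, D)$; thus $t([c]\varphi) := p_c(t(\varphi))$. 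Both recursions terminate since each call operates on a strictly smaller formula, and a straightforward induction verifies that $t(\varphi) \in \mathcal{L}_2$ for every $\varphi \in \mathcal{L}_1$.

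Correctness, $\M, s \vDash \varphi \Leftrightarrow \M, s \vDash t(\varphi)$, is proved by induction on $\varphi$. The only nontrivial case is $[c]\varphi$, which requires a secondary induction on $\psi \in \mathcal{L}_2$ establishing $\M, s \vDash [c]\psi \Leftrightarrow \M, s \vDash p_c(\psi)$. Its four sub-cases reduce to the validities $\vDash [c]\top \leftrightarrow \top$ (trivial), $\vDash [c]\lnot\psi \leftrightarrow \lnot[c]\psi$ (\DET), $\vDash [c](\psi_1 \land \psi_2) \leftrightarrow [c]\psi_1 \land [c]\psi_2$ (\DIST' from Lemma \ref{lem:provable}), and the key equivalence $\vDash [c]\Kv(C,D) \leftrightarrow \Kv(C \cup \{c\}, D)$, which I would verify by unpacking the semantics directly using the characterization of $\Kv(C,D)$ stated just before the definition of $\mathcal{L}_2$: both sides assert that every $u \in S$ with $s =_{C \cup \{c\}} u$ also satisfies $s =_D u$, and the case $c \in C$ is automatically absorbed since then $C \cup \{c\} = C$.

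The main (mild) obstacle is simply making the nested inductions interlock cleanly; once the four equivalences above are identified, the rest is bookkeeping. The content of the lemma is then the small surprise that the set-indexed abbreviation $\Kv(C,D)$ is already expressive enough to absorb every nesting of inspection operators with Boolean connectives that $\mathcal{L}_1$ permits.
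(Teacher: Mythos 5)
Your proposal is correct and follows essentially the same route as the paper: both push $[c]$ through negations via determinacy and through conjunctions via distribution until the inspection operators accumulate in front of $\Kv$, yielding a $\Kv(C,D)$ atom; your two-stage formulation with the auxiliary function $p_c$ is just a cleaner bookkeeping of the paper's direct pattern-matching definition and produces the same translation. The only (inessential for this lemma) point the paper adds is that $\varphi\leftrightarrow t(\varphi)$ is also \emph{provable} in $\SPIL_1$, a fact it needs later in the Truth Lemma.
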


\begin{proof}
As $\Kv(\cdot,\cdot)$ was just defined as an abbreviation, we
already know that $\mathcal{L}_1$ is at least as expressive as
$\mathcal{L}_2$: we have $\mathcal{L}_2 \subseteq \mathcal{L}_1$.
We can also translate in the other direction by pushing all sensing
operators through negations and conjunctions.
Formally, let $t : \mathcal{L}_1 \to \mathcal{L}_2$ be defined by
\[\begin{array}{lcl}
\Kv(d) & \mapsto      & \Kv(\varnothing,\{d\})\\
\lnot \varphi            & \mapsto & \lnot t(\varphi)\\
\varphi \land \psi       & \mapsto & t(\varphi) \land t(\psi)\\
\end{array}
\hspace{2em}
\begin{array}{lcl}
[c] \lnot \varphi        & \mapsto & \lnot t([c]\varphi)\\
{[c]} (\varphi \land \psi) & \mapsto & t([c]\varphi) \land t([c]\psi)\\
{[c]} \top  &\mapsto & \top\\
{[c_1]} \dots [c_n] \Kv(d) & \mapsto & \Kv(\{c_1,\dots,c_n\},\{d\})
\end{array}\]
Note that this translation preserves and reflects truth because
determinacy and distribution are valid (determinacy allows
us to push $[c]$ through negations, distribution to push  $[c]$
through conjunctions). At this stage we have not yet established
completeness, but determinacy is also an axiom.
Hence we can note separately that $\varphi \leftrightarrow t(\varphi)$
is provable and that $t$ preserves and reflects provability and
consistency.
\end{proof}

\begin{example}
Note that the translation of $[c]\varphi$ formulas also depends on the
top connective within $\varphi$. For example we have
\[ \begin{array}{rcl}
t([c](\lnot\Kv (d) \land [e]\Kv (f)))
& = & t([c]\lnot\Kv (d)) \land t([c][e]\Kv (f))\\
& = & \lnot\Kv(\{c\},\{d\}) \land \Kv(\{c,e\},\{f\})
   \end{array} \]
\end{example}

The language $\mathcal{L}_2$ allows us to connect PIL to the maybe
most famous axioms about database theory and dependence logic from
\cite{armstrong1974dependency}.

\begin{lemma}\label{lemma:armstrong}
Armstrong's axioms are semantically valid and derivable in $\SPIL_1$:
\begin{itemize}
  \item $\Kv(C,D)$ for any $D \subseteq C$ (projectivity)
  \item $\Kv(C,D) \land \Kv(D,E) \rightarrow \Kv(C,E)$ (transitivity)
  \item $\Kv(C,D) \land \Kv(C,E) \rightarrow \Kv(C,D \cup E)$ (additivity)
\end{itemize}
\end{lemma}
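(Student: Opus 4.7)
My plan splits into two parts: semantic validity, handled by direct unfolding of the boxed semantic characterization of $\Kv(C,D)$ that follows Definition \ref{def:abbreviations}, and syntactic derivability, handled by packaging the single-constant axioms into multi-constant form via the tools from Lemma \ref{lem:provable}. Semantically, projectivity is the trivial observation that $s =_C t$ implies $s =_D t$ whenever $D \subseteq C$; transitivity is a two-step chaining of implications along an arbitrary witness $t$; and additivity uses that $s =_D t$ together with $s =_E t$ is equivalent to $s =_{D \cup E} t$.

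On the syntactic side, projectivity and additivity are essentially bookkeeping. For projectivity I would start from the propositional tautology $\Kv(C) \to \Kv(D)$ (valid when $D \subseteq C$, since $\Kv(D)$ is then a subconjunction of $\Kv(C)$), lift it by multi-\NEC\ and multi-\DIST\ to $[C]\Kv(C) \to [C]\Kv(D)$, and modus ponens against $[C]\Kv(C)$ from multi-\LS. For additivity I would use that $\Kv(D) \land \Kv(E) \leftrightarrow \Kv(D \cup E)$ by propositional reasoning (up to reordering and deduplication of conjuncts), then apply multi-\DIST' to split the outer box and conclude $[C]\Kv(D \cup E) \leftrightarrow [C](\Kv(D) \land \Kv(E)) \leftrightarrow [C]\Kv(D) \land [C]\Kv(E)$.

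The main obstacle is transitivity, $[C]\Kv(D) \land [D]\Kv(E) \to [C]\Kv(E)$. The natural starting point is multi-\IR\ instantiated as $\Kv(D) \to ([D]\Kv(E) \to \Kv(E))$, equivalently $\Kv(D) \land [D]\Kv(E) \to \Kv(E)$; applying multi-\NEC, multi-\DIST\ and multi-\DIST' wraps this under $[C]$ and yields $[C]\Kv(D) \land [C][D]\Kv(E) \to [C]\Kv(E)$. The difficulty is that the hypothesis only supplies $[D]\Kv(E)$, not $[C][D]\Kv(E)$, so I need an auxiliary \emph{persistence lemma} $[D]\Kv(E) \to [C][D]\Kv(E)$. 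To prove it I would start from multi-\NF\ in the form $\Kv(E) \to [C]\Kv(E)$, prefix with $[D]$ via multi-\NEC\ and multi-\DIST\ to obtain $[D]\Kv(E) \to [D][C]\Kv(E)$, and then commute the two blocks by repeated applications of \COM, bubbling each $[c_i]$ leftward past each $[d_j]$. Chaining this persistence lemma with the wrapped multi-\IR\ step then closes the argument and delivers transitivity.
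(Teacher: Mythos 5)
Your proposal is correct and follows essentially the same route as the paper: your ``persistence lemma'' $[D]\Kv(E) \to [C][D]\Kv(E)$ (via \NF, \NEC, \DIST\ and \COM) and the $[C]$-wrapped instance of \IR\ are exactly the two branches of the paper's transitivity derivation, and additivity via multi-\DIST' matches as well. The only (harmless) divergence is in projectivity, where the paper derives $[D]\Kv(D)$ by multi-\LS\ and then prefixes the remaining $[c]$ for $c \in C \setminus D$ by \NEC, whereas you derive $[C]\Kv(C)$ and weaken the conjunction under the box.
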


\begin{proof}
The semantic validity is easy to check, hence we focus on the derivations.

For projectivity, take any two finite sets $D \subseteq C$.
If $D=C$, then we only need a derivation like the following which basically generalizes learning to finite sets.
\begin{prooftree}
  \AxiomC{}
  \RightLabel{(\LS)}
  \UnaryInfC{$[c_1]\Kv(c_1)$}
  \RightLabel{(\NEC)}
  \UnaryInfC{$[c_2][c_1]\Kv(c_1)$}
  \RightLabel{(\COM)}
  \UnaryInfC{$[c_1][c_2]\Kv(c_1)$}
                  \AxiomC{}
                  \RightLabel{(\LS)}
                  \UnaryInfC{$[c_2]\Kv(c_2)$}
                  \RightLabel{(\NEC)}
                  \UnaryInfC{$[c_1][c_2]\Kv(c_1)$}
          \RightLabel{(\DIST)}
          \BinaryInfC{$[c_1]( [c_2]\Kv(c_1) \land [c_2]\Kv(c_2) )$}
          \RightLabel{(\DIST)}
          \UnaryInfC{$[c_1][c_2] ( \Kv(c_1) \land \Kv(c_2) )$}
\end{prooftree}
If $D \subsetneq C$, then continue by applying \NEC\ for all elements of $C\setminus D$ to get $\Kv(C,D)$.

Transitivity follows from \IR\ and \NF\ as follows.
For simplicity, first we only consider the case where $C$, $D$ and $E$ are singletons.
\begin{prooftree}
  \AxiomC{}
  \RightLabel{(\NF)}
  \UnaryInfC{$\Kv(e) \to [c]\Kv(e)$}
  \RightLabel{(\NEC)}
  \UnaryInfC{$[d](\Kv(e) \to [c]\Kv(e))$}
  \RightLabel{(\DIST)}
  \UnaryInfC{$[d]\Kv(e) \to [d][c]\Kv(e)$}
  \RightLabel{(\COM)}
  \UnaryInfC{$[d]\Kv(e) \to [c][d]\Kv(e)$}
                  \AxiomC{}
                  \RightLabel{(\IR)}
                  \UnaryInfC{$\Kv(d) \to ([d] \Kv(e) \to \Kv(e))$}
                  \RightLabel{(\NEC)}
                  \UnaryInfC{$[c] (\Kv(d) \to ([d] \Kv(e) \to \Kv(e)))$}
                  \RightLabel{(\DIST)}
                  \UnaryInfC{$[c] \Kv(d) \to [c]([d] \Kv(e) \to \Kv(e))$}
                  \RightLabel{(\DIST)}
                  \UnaryInfC{$[c] \Kv(d) \to ([c][d] \Kv(e) \to [c]\Kv(e))$}
          \RightLabel{(\TAUT)}
          \BinaryInfC{$[c] \Kv(d) \to ( [d] \Kv(e) \to [c]\Kv(e) ) $}
\end{prooftree}

Now consider any three finite sets of constants $C=\{c_1,\dots,c_l\}$.
Using the abbreviations from Definition \ref{def:abbreviations} and
the ``multi'' rules given in Lemma \ref{lem:provable} it is easy to
generalize the proof. In fact, the proof is exactly the same with
capital letters.

Similarly, additivity follows immediately from multi-\DIST'.
\end{proof}

We can now use Armstrong's axioms to prove completeness of our logic.
The crucial idea is a new definition of a canonical dependency graph.

\begin{theorem}[Strong Completeness]\label{thm:PIL-strong-completeness}
For all sets of formulas $\Delta \subseteq \mathcal{L}_1$ and all
formulas $\varphi \in \mathcal{L}_1$, if $\Delta \vDash \varphi$, then
also $\Delta \vdash \varphi$.
\end{theorem}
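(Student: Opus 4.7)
The plan is to establish strong completeness via the standard contrapositive: show that every $\SPIL_1$-consistent set $\Delta \subseteq \mathcal{L}_1$ is satisfiable. The first move is to invoke Lemma~\ref{lemma:equiexpressive}: since the translation $t$ preserves and reflects provability, a set $\Delta$ is consistent iff $t[\Delta]$ is consistent in $\mathcal{L}_2$. So I may assume $\Delta \subseteq \mathcal{L}_2$, i.e.\ every formula of $\Delta$ is a boolean combination of formulas of the shape $\Kv(C,D)$ for finite sets $C,D \subseteq \C$. This is a huge simplification, because now the interesting content of $\Delta$ is encoded purely by a list of positive and negative dependency assertions.

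Next, extend $\Delta$ by Lindenbaum's lemma to a maximal $\SPIL_1$-consistent set $\Gamma$, and look at the positive part $\mathcal{P} = \{(C,D) \mid \Kv(C,D) \in \Gamma\}$ and negative part $\mathcal{N} = \{(C,D) \mid \lnot \Kv(C,D) \in \Gamma\}$. By Lemma~\ref{lemma:armstrong}, $\mathcal{P}$ is closed under projectivity, transitivity and additivity --- in database terms, it is an Armstrong-closed family of functional dependencies. I then build the canonical dependency graph promised in Definition~\ref{def:canonical-g-and-m}: take one designated world $s^\ast$ for $\Gamma$, and for every negative assertion $(C,D) \in \mathcal{N}$ pick a witness world $s_{(C,D)}$ that is supposed to agree with $s^\ast$ on all constants in $C$ but differ on some constant in $D$. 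The domain $\D$ can be taken to be the set of pairs (world, constant), and the valuation $V(t,c)$ assigns to $(t,c)$ a ``colour'' determined by the equivalence class of $t$ under the relation ``$=_c$ is forced by $\Gamma$''. Concretely: declare $s^\ast =_c s_{(C,D)}$ exactly when $c \in C$, and extend this to an equivalence per constant; for each pair $s_{(C,D)}, s_{(C',D')}$ one decides agreement on $c$ using the dependencies in $\Gamma$ (they agree on $c$ iff the agreement relations already forced between them through $s^\ast$ and Armstrong-closure entail it).

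The heart of the argument is the truth lemma: for every $\mathcal{L}_2$-formula $\psi$ and every world $t$ of the canonical model, $\M,t \vDash \psi$ iff $\psi \in \Gamma_t$, where $\Gamma_t$ is the maximal consistent set attached to $t$ (with $\Gamma_{s^\ast} = \Gamma$ and the $\Gamma_{s_{(C,D)}}$ obtained by another Lindenbaum step from the ``viewpoint'' formulas forced at each witness). The only non-boolean atomic case is $\Kv(C,D)$, and it splits in two directions: if $\Kv(C,D) \in \Gamma_t$ then by construction any world $u$ with $t =_C u$ was only introduced when the relevant chain of dependencies in $\mathcal{P}$ (together with Armstrong closure) forced $t =_D u$; conversely if $\lnot \Kv(C,D) \in \Gamma_t$ then the witness world $s_{(C,D)}^{t}$ provides the required $C$-agreeing, $D$-disagreeing partner. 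With the truth lemma in hand, $\M, s^\ast \vDash \Delta$, proving satisfiability, hence strong completeness by contrapositive.

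I expect the main obstacle to be the verification that the canonical dependency graph is coherent: one must check that the equivalence relations $=_c$ induced on the set of witness worlds are well-defined (transitive and symmetric), that no accidental extra agreements arise which would validate some $\Kv(C,D) \notin \Gamma$, and that conversely every dependency in $\mathcal{P}$ really does hold in the constructed model. This is exactly where Armstrong's axioms do the work: projectivity, transitivity and additivity are precisely what guarantee that closing under the forced agreements does not produce more dependencies than $\mathcal{P}$ already contains. Making that matching argument airtight --- essentially an Armstrong-style completeness theorem for functional dependencies, bridged to the modal setting via the translation $t$ --- is the technical crux of the proof.
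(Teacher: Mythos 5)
Your high-level strategy (contraposition, Lindenbaum, reduction to $\mathcal{L}_2$ via the translation $t$, Armstrong's axioms, witness worlds for negative dependencies) matches the paper's, but the construction as you describe it has a genuine gap at exactly the point you yourself flag as ``the technical crux''. You declare $s^\ast =_c s_{(C,D)}$ \emph{exactly} when $c \in C$, i.e.\ the witness for $\lnot\Kv(C,D)$ disagrees with $s^\ast$ on every constant outside $C$. But if $\Kv(C,\{e\}) \in \Gamma$ for some $e \notin C$ (which happens as soon as $\Gamma$ contains any nontrivial dependency with antecedent $C$), this witness world falsifies $\Kv(C,\{e\})$ at $s^\ast$, destroying the positive direction of the truth lemma. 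The witness must agree with $s^\ast$ not just on $C$ but on the entire dependency-closure of $C$. This is precisely what the paper's construction builds in: the witness is $t := \{ c' \in \C \mid C \to \{c'\} \text{ in } G_\Gamma \}$, and transitivity of the Armstrong-closed graph guarantees that this one closure step already yields a closed set, while additivity guarantees $D \not\subseteq t$. Your closing paragraph acknowledges that the coherence of the induced agreement relations is unverified, but that verification is not a routine check to be deferred --- it is the content of the proof, and your stated witness definition makes it false.

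A second, smaller divergence: you plan to prove the truth lemma at \emph{every} world, attaching a maximal consistent set $\Gamma_t$ to each witness via further Lindenbaum steps. This is both unnecessary and hard to carry out (it is unclear what ``viewpoint formulas forced at each witness'' should be, and $\Kv(C,D)$ at a witness world quantifies over the whole model, so there is no reason those sets would be correctly realized). The paper exploits the fact that $\mathcal{L}_1$ has no nested epistemic modalities: the truth lemma is only needed at the single designated state, and the canonical model can therefore be radically simple --- states are the $G_\Gamma$-closed subsets of $\C$, the domain is $\{0,1\}$, the valuation is membership, and the designated state is $\C$ itself. Adopting that construction eliminates your well-definedness worries about the per-constant equivalence relations entirely, since $s =_c t$ becomes ``$c \in s \Leftrightarrow c \in t$''.
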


\begin{proof}
By contraposition using a canonical model.
Suppose $\Delta \nvdash \varphi$.
Then $\Delta \cup \{\lnot \varphi\}$ is consistent and there is a maximally
consistent set $\Gamma \subseteq \mathcal{L}_1$ such that $\Gamma \supseteq \Delta \cup \{\lnot \varphi\} $.
We will now build a model $\M_\Gamma$ such that for the world $\C$
in that model we have $\M_\Gamma, \C \vDash \Gamma$ which implies
$\Delta \nvDash \varphi$.

\begin{definition}[Canonical Graph and Model]\label{def:canonical-g-and-m}
Let the graph $G_\Gamma := (\mathcal{P}(\C),\rightarrow)$ be given by $A \rightarrow B$ iff $\Kv(A,B) \in \Gamma$.
By Lemma \ref{lemma:armstrong} this graph has properties corresponding
to the Armstrong axioms: projectivity, transitivity and additivity.
We call a set of variables $s \subseteq \C$ \emph{closed} under
$G_\Gamma$ iff whenever $A \subseteq s$ and $A \to B$ in
$G_\Gamma$, then also $B \subseteq s$.
Then let the canonical model be $\M_\Gamma := (S, \D, V)$ where
  \[ S := \{ s \subseteq \C \mid s \text{ is closed under $G_\Gamma$} \},
    \D := \{0, 1\} \text{ and }
    V(s,c) = \left\{ \begin{array}{cl}
      0 & \text{if }c \in s \\
      1 & \text{otherwise}
    \end{array}
  \right. \]
\end{definition}

Note that our domain is just $\{0,1\}$.  This is possible because
we do not have to find a model where the dependencies hold
globally. Instead, $\Kv(C,d)$ only says that given the $C$-values at
the actual world, also the $d$ values are the same at the other
worlds.  The dependency does not need to hold between two non-actual
worlds.  This distinguishes our models from relationships as
discussed in \cite{armstrong1974dependency} where no actual world or
state is used, see Example \ref{ex:pointed-difference} below.

Given the definition of a canonical model we can now show:

\begin{lemma}[Truth Lemma]
$\M_\Gamma, \C \vDash \varphi$ iff $\varphi \in \Gamma$.
\end{lemma}

Before going into the proof, let us emphasize two peculiarities of
our truth lemma: First, the states in our canonical model are not
maximally consistent sets of formulas but sets of constants.
Second, we only claim the truth Lemma at one specific state, namely
$\C$ where all constants have value $0$. As our language does not
include nested epistemic modalities, we actually never evaluate
formulas at other states of our canonical model.

\begin{proof}[Truth Lemma]
Note that it suffices to show this for all $\varphi$ in $\mathcal{L}_2$:
Given some $\varphi \in \mathcal{L}_1$, by Lemma \ref{lemma:equiexpressive} we have that $\M_\Gamma, \C \vDash \varphi \iff \M_\Gamma, \C \vDash t(\varphi)$ because the translation preserves and reflects truth.
Moreover, we have $\varphi \in \Gamma \iff t(\varphi) \in \Gamma$, because $\varphi \leftrightarrow t(\varphi)$ is provable in $\SPIL_1$.
Hence it suffices to show that $\M_\Gamma, \C \vDash t(\varphi)$ iff $t(\varphi) \in \Gamma$, i.e.~to show the Truth Lemma for $\mathcal{L}_2$.
Again, negation and conjunction are standard, the crucial case are dependencies.

  Suppose $\Kv(C,D) \in \Gamma$.
  By definition $C\to D$ in $G_\Gamma$.
  To show $\M_\Gamma, \C \vDash \Kv(C,D)$,
    take any $t$ such that $\C =_C t$ in $\M_\Gamma$.
    Then by definition of $V$ we have $C \subseteq t$.
    As $t$ is closed under $G_\Gamma$, this implies $D \subseteq t$.
    Now by definition of $V$ we have $\C =_D t$.

  For the converse, suppose $\Kv(C,D) \not\in \Gamma$.
  Then by definition $C \not\to D$ in $G_\Gamma$.
  Now, let $t := \{ c' \in \C \mid C \to \{c'\} \text{ in } G_\Gamma \}$.
  This gives us $C \subseteq t$.
  But we also have $D \not\subseteq t$ because otherwise additivity would imply $C \to D$ in $G_\Gamma$.
  Moreover, because $G_\Gamma$ is transitive it is enough to ``go one step'' in $G_\Gamma$ to get a set that is closed under $G_\Gamma$.
  This means that $t$ is closed under $G_\Gamma$ and therefore a state in our model, i.e.~we have $t \in S$.
  Now by definition of $V$ and projectivity,  we have $\C =_C t$ but $\C \neq_D t$.
  Thus $t$ is a witness for $\M_\Gamma, \C \nvDash \Kv(C,D)$.
\end{proof}

This also finishes the completeness proof. Note that we used all
three properties corresponding to the Armstrong axioms.
\end{proof}

\begin{example}\label{ex:canonical-graph}
To illustrate the idea of the canonical dependency graph, let us
study a concrete example of what the graph and model look like.
Consider the maximally consistent set
$\Gamma = \{ \lnot \Kv(c), \lnot \Kv(d), \Kv(e), \Kv(c,d), \dots \}$.
The interesting part of the canonical graph $G_\Gamma$ then looks as
follows, where the nodes are subsets of $\{c,d,e\}$. For clarity we
only draw $\to \cap \not\subseteq$, i.e.~we omit edges given by
inclusions. For example all nodes will also have an edge going to
the $\varnothing$ node.

\begin{center}
\begin{tikzpicture}[node distance=2cm,>=latex,->]
  \node (cde) {$\{c,d,e\}$};
  \node (ec) [left of=cde] {$\{e,c\}$};
  \node (cd) [below of=cde, node distance=1cm] {$\{c,d\}$};
  \node (de) [left of=ec] {$\{d,e\}$};
  \node (c) [left of=cd] {$\{c\}$};
  \node (d) [left of=c] {$\{d\}$};
  \node (0) [left of=d] {$\varnothing$};
  \node (e) [left of=de] {$\{e\}$};
  \draw (c) -- (d);
  \draw (c) -- (cd);
  \draw (0) -- (e);
  \draw (d) -- (de);
  \draw (c) -- (ec);
  \draw (cd) -- (cde);
  \draw (ec) -- (cde);
  \draw (ec) -- (cde);
  \draw (c) -- (de);
  \draw (c) -- (cde);
\end{tikzpicture}
\end{center}

To get a model out of this graph, note that there are exactly three subsets of $\C$ closed under following the edges.
Namely, let $S = \{ s:\{e\}, t:\{d,e\}, u:\{c,d,e\} \}$ and use the
binary valuation which says that a constant has value 0 iff it is
an element of the state.
It is then easy to check that $\M, u \vDash \Gamma$.

\begin{center}
\begin{tabular}{cccc}
  & s & t & u \\
\hline
c & 1 & 1 & 0 \\
d & 1 & 0 & 0 \\
e & 0 & 0 & 0 \\
\end{tabular}
\end{center}
\end{example}

It is also straightforward to define an appropriate notion of bisimulation.

\begin{definition}\label{def:single-bisim}
Two pointed models $((S, \D, V),s)$ and $((S', \D', V'),s')$,
are \emph{bisimilar} iff
  (i) For all finite $C \subseteq \mathbb{C}$ and all $d \in \mathbb{C}$:
    If there is a $t \in S$ such that $s =_C t$ and $s \neq_d t$,
    then there is a $t' \in S'$ such that $s' =_C t'$ and $s' \neq_d t'$; and
  (ii) Vice versa.
\end{definition}

Note that we do not need the bisimulation to also link non-actual
worlds. This is because all formulas are evaluated at the same world.
In fact it would be too strong for the following characterization.

\begin{theorem}\label{thm:PIL-bisim}
Two pointed models satisfy the same formulas iff they are bisimilar.
\end{theorem}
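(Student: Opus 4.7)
The plan is to reduce the claim to the language $\mathcal{L}_2$ via Lemma \ref{lemma:equiexpressive} and then observe that clauses (i) and (ii) of Definition \ref{def:single-bisim} are literally the semantic unfolding of the atoms $\Kv(C,\{d\})$.

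First, by Lemma \ref{lemma:equiexpressive}, two pointed models satisfy the same $\mathcal{L}_1$ formulas iff they satisfy the same $\mathcal{L}_2$ formulas. Since $\mathcal{L}_2$ is the Boolean closure of atoms of the form $\Kv(C,D)$, it suffices to handle such atoms. Moreover, the additivity equivalence $\Kv(C,D) \leftrightarrow \bigwedge_{d \in D} \Kv(C,\{d\})$ is semantically valid (Lemma \ref{lemma:armstrong}), so it is enough to establish agreement on atoms of the form $\Kv(C,\{d\})$ with $C$ finite and $d \in \C$. Unfolding the semantics, $\M, s \nvDash \Kv(C,\{d\})$ holds precisely when there exists $t \in S$ with $s =_C t$ and $s \neq_d t$, which is exactly the antecedent of clause (i) of Definition \ref{def:single-bisim}.

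For the $(\Leftarrow)$ direction I would assume the pointed models are bisimilar. Fix any finite $C$ and any $d$. If $\M, s \nvDash \Kv(C,\{d\})$, pick a witness $t \in S$; clause (i) of bisimulation yields $t' \in S'$ with $s' =_C t'$ and $s' \neq_d t'$, showing $\M', s' \nvDash \Kv(C,\{d\})$. Clause (ii) gives the converse implication, so the two atoms receive the same truth value. Booleans and $\mathcal{L}_1$-to-$\mathcal{L}_2$ translation then transport agreement to all of $\mathcal{L}_1$.

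For the $(\Rightarrow)$ direction I would assume modal equivalence and verify clause (i) (clause (ii) being symmetric). Given any finite $C \subseteq \C$ and $d \in \C$, if some $t \in S$ satisfies $s =_C t$ and $s \neq_d t$, then $\M, s \nvDash \Kv(C,\{d\})$; modal equivalence gives $\M', s' \nvDash \Kv(C,\{d\})$, which by the same unfolding supplies the required $t' \in S'$. I do not expect any genuine obstacle here: the theorem essentially records that the bisimulation notion in Definition \ref{def:single-bisim} was tailored to the atoms of $\mathcal{L}_2$. The only mildly subtle point to flag explicitly is why it is legitimate to link only the actual worlds and not arbitrary states, which is exactly because the translation into $\mathcal{L}_2$ shows that $\mathcal{L}_1$ never evaluates formulas at non-actual worlds.
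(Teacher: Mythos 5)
Your proposal is correct and follows essentially the same route as the paper's own proof: reduce to $\mathcal{L}_2$ via Lemma \ref{lemma:equiexpressive}, restrict to atoms $\Kv(C,\{d\})$ with singleton second argument via additivity, and then observe that the two clauses of Definition \ref{def:single-bisim} are exactly the semantic unfolding of $\neg\Kv(C,d)$ at the designated points. Your closing remark about why only the actual worlds need to be linked matches the paper's own comment following the definition of bisimulation.
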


\begin{proof}
By Lemma \ref{lemma:equiexpressive} we only have to consider formulas
of $\mathcal{L}_2$. Moreover, it suffices to consider formulas
$\Kv(C,d)$ with a singleton in the second set because $\Kv(C,D)$ is
equivalent to $\bigwedge_{d\in D} \Kv(C,d)$.
Then it is straightforward to show that if $\M,s$ and $\M',s'$ are
bisimilar then $\M,s\vDash\neg \Kv(C,d)\iff \M',s'\vDash\neg \Kv(C,d)$
by definition of our bisimulation. The other way around is also
obvious since the two conditions for bisimulation are based on the
semantics of  $\neg \Kv(C,d)$.
\end{proof}

Note that a bisimulation characterization for a language without the
dynamic operator can be obtained by restricting Definition
\ref{def:single-bisim} to $C=\varnothing$.
We leave it as an exercise for the reader to use this and Theorem
\ref{thm:PIL-bisim} to show that $[c]$ is not reducible, which
distinguishes it from the public announcement $[\varphi]$ in PAL.

\begin{example}[Pointed Models Make a Difference]\label{ex:pointed-difference}
It seems that the following theorem of our logic does not translate to
Armstrong's system from \cite{armstrong1974dependency}.
\[ [c](\Kv(d) \lor \Kv(e)) \leftrightarrow ([c]\Kv(d) \lor [c]\Kv(e)) \]
First, to see that this is provable, note that it follows from
determinacy and seriality.
Second, it is valid because we consider pointed models which convey
more information than a simple list of possible values.
Consider the following table which represents $4$ possible worlds.
\begin{center}
  \begin{tabular}{lll}
    $c$ & $d$ & $e$ \\
    \hline
    1 & 1 & 3 \\
    1 & 1 & 2 \\
    2 & 2 & 1 \\
    2 & 3 & 1 \\
  \end{tabular}
\end{center}
Here we would say that ``After learning $c$ we know $d$ or we know $e$.'',
i.e.~the antecedent of above formula holds. However, the consequent
only holds if we evaluate formulas while pointing at a specific
world/row: It is globally true that given $c$ we will learn $d$ or
that given $c$ we will learn $e$. But none of the two disjuncts holds
globally which would be needed for a dependency in Armstrong's sense.
Note that this is more a matter of expressiveness than of logical strength.
In Armstrong's system there is just no way to express $[c](\Kv(d) \lor \Kv(e))$.
\end{example}

\section{Multi-Agent PIL}

We now generalize the Public Inspection Logic to multiple agents.
In the language we use $\Kv_i$ to say that agent $i$ knows the value
of $c$ and in the models an accessibility relation for each
agent is added to describe their knowledge.
To obtain a complete proof system we can leave most axioms as above
but have to restrict the irrelevance axiom. Again the completeness
+proof uses a canonical model construction and a truth lemma for a
+restricted but equally expressive syntax.
The only change is that we now define a dependency graph for each
agent in order to define accessibility relations instead of
restricted sets of worlds.

\begin{definition}[Multi-Agent PIL]\label{def:multi-agent-PIL}
We fix a non-empty set of agents $I$. The language $\mathcal{L}^I_1$
of multi-agent Public Inspection Logic is given by
$$\varphi ::= \top  \mid  \lnot\varphi \mid \varphi\land\varphi \mid \Kv_i c \mid [c]\varphi$$
where $i\in I$.
We interpret it on models
$\langle S, \D, V, R \rangle$ where $S$, $\D$ and $V$ are as before
and $R$ assigns to each agent $i$ an equivalence relation $\sim_i$
over $S$.
The semantics are standard for the booleans and as follows:
\[ \begin{array}{lll} \hline
\M,s \vDash \Kv_i c  & \iff  & \forall t \in S : s \sim_i t \Rightarrow s =_c t \\
\M,s\vDash [c]\varphi & \iff & \M|^s_c,s\vDash\varphi\\
\hline \end{array} \]
where $\M|^s_c$ is
  $\langle S', \D, V|_{S' \times \C}, R|_{S' \times S'} \rangle$
with $S'=\{t \in S \mid s =_c t \}$.

Analogous to Definition \ref{def:abbreviations} we define the
following abbreviation to express dependencies known by agent $i$
and note its semantics:
\[ \Kv_i(C,D) := [c_1]\dots[c_n](\Kv_i(d_1) \land \dots \land \Kv_i(d_m)) \]
\[ \begin{array}{c} \hline
\M, s \vDash \Kv_i(C,D) \Leftrightarrow \text{for all } t \in S :
  \text{if } s \sim_i t \text{ and } s =_C t \text{ then } s =_D t \\
\hline \end{array} \]

The proof system $\SPIL$ for $\PIL$ in the language $\mathcal{L}^I_1$
is obtained by replacing each $\Kv$ in the axioms of $\SPIL_1$ by
$\Kv_i$, and replacing \IR\ by the following restricted version:
\begin{center}
\begin{tabular}{ll}
\RIR \hspace{1.5em} & $\Kv_i c \rightarrow ([c]\varphi \to \varphi)$ where $\varphi$ does not mention any agent besides $i$ \\
\end{tabular}
\end{center}
\end{definition}

Before summarizing the completeness proof for the multi-agent
setting, let us highlight some details of this definition.

As be fore the actual state $s$ plays an important role in the
semantics of $[c]$. However, we could also use an alternative but
equivalent definition: Instead of deleting states, only delete the
$\sim_i$ links between states that disagree on the value of $c$.
Then the update no longer depends on the actual state.

For traditional reasons we define $\sim_i$ to be an equivalence
relation. This is not strictly necessary, because our language
can not tell whether the relation is reflexive, transitive or
symmetric. Removing this constraint and extending the class of
models would thus not make any difference in terms of validities.

For the proof system, note that the original irrelevance axiom \IR\
is \emph{not} valid in the multi-agent setting because $\varphi$ might
talk about other agents for which the inspection of $c$ does matter.

\begin{theorem}[Strong Completeness for $\SPIL$]
For all sets of formulas $\Delta \subseteq \mathcal{L}^I_1$ and all
formulas $\varphi \in \mathcal{L}^I_1$, if $\Delta \vDash \varphi$, then
also $\Delta \vdash \varphi$.
\end{theorem}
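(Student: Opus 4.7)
The plan is to follow the single-agent completeness argument of Theorem \ref{thm:PIL-strong-completeness}, but with one canonical dependency graph per agent and an accessibility relation shaped so that each $\Kv_i$ quantifies over exactly the closed-set copies attached to agent $i$. First I would lift the two preparatory results. Lemma \ref{lemma:equiexpressive} transfers essentially verbatim, since the translation only pushes $[c]$ through booleans using \DET\ and \DIST, both of which appear in $\SPIL$ unchanged; so $\mathcal{L}^I_1$ is equally expressive with its fragment of dependency formulas $\Kv_i(C,D)$. Lemma \ref{lemma:armstrong} lifts to give Armstrong's axioms for each $\Kv_i$; the only place where \IR\ was used (in the derivation of transitivity) involved the auxiliary formula $\Kv_i(e)$, which mentions only agent $i$ and is therefore a legal target for \RIR.

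Given $\Delta \nvdash \varphi$, I extend $\Delta \cup \{\lnot\varphi\}$ to a maximally consistent $\Gamma \subseteq \mathcal{L}^I_1$, and for each $i \in I$ I build a canonical graph $G^i_\Gamma = (\mathcal{P}(\C), \to^i)$ by $A \to^i B$ iff $\Kv_i(A,B) \in \Gamma$. By the lifted Armstrong lemma each $G^i_\Gamma$ is projective, transitive, and additive. I then define a canonical model $\M_\Gamma = (S, \D, V, R)$ with $S := \{\ast\} \cup \{(i, s) \mid i \in I \text{ and } s \subseteq \C \text{ is closed under } G^i_\Gamma\}$, $\D := \{0,1\}$, $V(\ast, c) = 0$, and $V((i,s), c) = 0$ iff $c \in s$; I let $\sim_i$ be the equivalence relation whose single nontrivial class is $\{\ast\} \cup \{(i, s) \mid s \text{ closed under } G^i_\Gamma\}$, with every $(j, t)$ for $j \neq i$ being $\sim_i$-isolated. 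The shape of this accessibility is the key design choice: the $\sim_i$-successors of $\ast$ are exactly the closed-set copies tagged with $i$, with no accidental contribution from other agents' copies.

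The truth lemma then reads $\M_\Gamma, \ast \vDash \varphi \iff \varphi \in \Gamma$. Using the lifted equi-expressiveness it suffices to consider $\varphi \in \mathcal{L}^I_2$; the booleans are routine, and the only non-trivial case is $\Kv_i(C, D)$. Since $\ast =_C (i, s)$ iff $C \subseteq s$, the semantic clause reduces to: every closed $s$ with $C \subseteq s$ also satisfies $D \subseteq s$, which is $C \to^i D$ in $G^i_\Gamma$; the argument is word-for-word the single-agent one, exploiting projectivity, transitivity, and additivity of $G^i_\Gamma$. Iterated $[c_1] \cdots [c_m]$ in front of $\Kv_i(d)$ works the same way, since successive updates merely pull $C = \{c_1, \dots, c_m\}$ into the hypothesis $C \subseteq s$.

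The main obstacle I expect is the bookkeeping around the restricted axiom \RIR: I must check that every appeal to irrelevance in the lifted Armstrong derivations and in anything inherited from Lemma \ref{lem:provable} concerns a formula mentioning only a single agent, so that \RIR\ actually applies. The other delicate point is verifying that $\Kv_i(C,D)$ at $\ast$ really sees only agent $i$'s copies and never, through some closure of the equivalence relation, spills over into other agents' copies; the tight definition of $\sim_i$ handles this, but requires a careful check that $\sim_i$ is an equivalence relation and that its restriction $R|_{S' \times S'}$ under the semantics of $[c]$ preserves the structure needed for the truth lemma at $\ast$.
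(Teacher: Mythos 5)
Your proposal is correct and follows essentially the same route as the paper: per-agent canonical dependency graphs, the lifted Armstrong properties (with the observation that \RIR\ suffices for the derivations), and a truth lemma at the distinguished all-zero world for the fragment of $\Kv_i(C,D)$ formulas. The only difference is cosmetic: the paper takes $S = \mathcal{P}(\C)$ with $s \sim_i t$ iff $s$ and $t$ are both closed or both non-closed under $G^i_\Gamma$, whereas you use disjoint tagged copies of the closed sets per agent; since the truth lemma is only ever evaluated at the actual world, its $\sim_i$-neighbourhood is the collection of closed sets in either construction, and the argument goes through unchanged.
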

\begin{proof}
By the same methods as for Theorem \ref{thm:PIL-strong-completeness}.
Given a maximally consistent set $\Gamma \subseteq \mathcal{L}^I_1$
we want to build a model $\M_\Gamma$ such that for the world $\C$
in that model we have $\M_\Gamma, \C \vDash \Gamma$.

First, for each agent $i \in I$, let $G_\Gamma^i$ be the graph given
by $A \rightarrow_i B \ \ :\iff  \ \Gamma \vdash \Kv_i(A,B)$.
Given that the proof system $\SPIL$ was obtained by indexing the
axioms of $\SPIL_1$, it is easy to check that indexed versions of
the Armstrong axioms are provable and therefore all the graphs
$G_\Gamma^i$ for $i \in I$ will have the corresponding properties.
In particular $\RIR$ suffices for this.

Second, define the canonical model $\M_\Gamma := (S, \D, V, R)$ where
$S := \mathcal{P}(\C)$, $\D := \{0,1\}$,
$V(s,c):=0$ if $c \in s$ and $V(s,c):=1$ otherwise,
and $s \sim_i t$ iff $s$ and $t$ are both closed or both not closed under $G_\Gamma^i$.

\begin{lemma}[Multi-Agent Truth Lemma]
$\M_\Gamma, \C \vDash \varphi$ iff $\varphi \in \Gamma$.
\end{lemma}

\begin{proof}
Again it suffices to show the Truth Lemma for a restricted language
and we only consider the state $\C$.
We proceed by induction on $\varphi$.
The crucial case is when $\varphi$ is of form $\Kv_i(C,D)$.

  Suppose $\Kv_i(C,D) \in \Gamma$.
  Then by definition $C \to D$ in $G_\Gamma^i$.
  To show $\M_\Gamma, \C \vDash \Kv_i(C,D)$,
    take any $t$ such that $\C \sim_i t$ and $\C =_C t$ in $\M_\Gamma$.
    Then by definition of $V$ we have $C \subseteq t$.
    Moreover, $\C$ is closed under $G_\Gamma^i$.
    Hence by definition of $\sim_i$ also $t$ must be closed under $G_\Gamma^i$ which implies $D \subseteq t$.
    Now by definition of $V$ we have $\C =_D t$.

  For the converse, suppose $\Kv_i(C,D) \not\in \Gamma$.
  Then by definition $C \not\to D$ in $G_\Gamma^i$.
  Now, let $t := \{ c' \in \C \mid C \to \{c'\} \text{ in } G_\Gamma^i \}$.
  This gives us $C \subseteq t$.
  But we also have $D \not\subseteq t$ because otherwise additivity would imply $C \to D$ in $G_\Gamma^i$.
  Moreover, because $G_\Gamma^i$ is transitive it is enough to ``go one step'' in $G_\Gamma^i$ to get a set that is closed under $G_\Gamma^i$.
  This means that $t$ is closed under $G_\Gamma^i$ and therefore by definition of $\sim_i$ we have $\C \sim_i t$.
  Now by definition of $V$ and projectivity, we have $\C =_C t$ but $\C \neq_D t$.
  Thus $t$ is a witness for $\M_\Gamma, \C \nvDash \Kv_i(C,D)$.
\end{proof}

Again the Truth Lemma also finishes the completeness proof.
\end{proof}

\begin{figure}\centering
\vspace{-1em}
\begin{tikzpicture}[node distance=9mm,>=latex,->]
  \node (cd) {$\{c,d\}$};
  \node (c) [left  of=cd, below of=cd] {$\{c\}$};
  \node (d) [right of=cd, below of=cd] {$\{d\}$};
  \node (0) [right of=c,  below of=c] {$\varnothing$};
  \draw (c) -- (cd);
  \draw (c) -- (d);
  \node (label) [below of=0] {$G_\Gamma^1$ (omitting $\subseteq$)};
\end{tikzpicture}
\hspace{1em}
\begin{tikzpicture}[node distance=9mm,>=latex,->]
  \node (cd) {$\{c,d\}$};
  \node (c) [left  of=cd, below of=cd] {$\{c\}$};
  \node (d) [right of=cd, below of=cd] {$\{d\}$};
  \node (0) [right of=c,  below of=c] {$\varnothing$};
  \draw (d) -- (cd);
  \draw (d) -- (c);
  \node (label) [below of=0] {$G_\Gamma^2$ (omitting $\subseteq$)};
\end{tikzpicture}
\hspace{2em}
\begin{tikzpicture}[node distance=1.5cm,>=latex]
  \node (cd) {$\begin{array}{c}\{c,d\}\\c=0\\d=0\end{array}$};
  \node (c) [left  of=cd, below of=cd] {$\begin{array}{c}\{c\}\\c=0\\d=1\end{array}$};
  \node (d) [right of=cd, below of=cd] {$\begin{array}{c}\{d\}\\c=1\\d=0\end{array}$};
  \node (0) [right of=c,  below of=c] {$\begin{array}{c}\varnothing\\c=1\\d=1\end{array}$};
  \draw (cd) edge [out=260,in=100] node[left]{2} (0);
  \draw (cd) edge [out=280,in=80] node[right]{1} (0);
  \draw (c) -- node[above]{2} (cd);
  \draw (c) -- node[above]{2} (0);
  \draw (d) -- node[above]{1} (cd);
  \draw (d) -- node[above]{1} (0);
\end{tikzpicture}
\caption{Two canonical dependency graphs and the resulting canonical model.}
\label{fig:multi-canonical-constructs}
\vspace{-2em}
\end{figure}
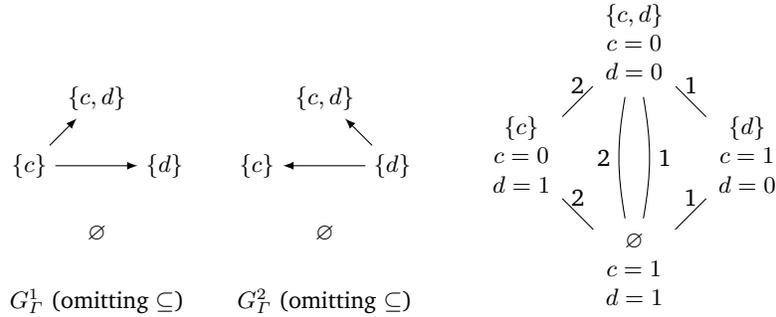

\begin{example}
Analogous to Example \ref{ex:canonical-graph}, the following
illustrates the multi-agent version of our canonical construction.
Consider the maximally consistent set
$\Gamma = \{
  \lnot \Kv_1(d), \Kv_1(c,d), \lnot \Kv_1(d,c),
  \lnot \Kv_2(c), \lnot \Kv_2(c,d), \Kv_1(d,c),
\dots \}$.
Note that agents $1$ and $2$ do not differ in which values they know
right now but there is a difference in what they will learn from
inspections of $c$ and $d$.
The two canonical dependency graphs generated from $\Gamma$ are shown
in Figure \ref{fig:multi-canonical-constructs}. Again for clarity we
only draw the non-inclusion arrows.
The subsets of $\C=\{c,d\}$ closed under the graphs are thus
  $\{ \{c,d\}, \{d\}, \varnothing \}$ and
  $\{ \{c,d\}, \{c\}, \varnothing \}$
for agent $1$ and $2$ respectively, inducing the equivalence
relations as shown in Figure \ref{fig:multi-canonical-constructs}.
\end{example}

It is also not hard to find the right notion of bisimulation for
$\SPIL$.

\begin{definition}
Given two models $(S,\D,V,R)$ and $(S',\D',V',R')$, a relation
$Z \subseteq S \times S'$ is a \emph{multi-agent bisimulation}
iff for all $s Z s'$ we have
  (i) For all finite $C \subseteq \mathbb{C}$, all $d \in \mathbb{C}$ and all agents $i$:
    If there is a $t \in S$ such that $s \sim_i t$ and $s =_C t$ and $s \neq_d t$,
    then there is a $t' \in S'$ such that $t Z t'$ and $s' \sim_i t'$ and $s =_C t$ and $s' \neq_d t'$; and
  (ii) Vice versa.
\end{definition}

\begin{theorem}
Two pointed models satisfy the same formulas of the multi-agent
language $\mathcal{L}^I_1$ iff there is a multi-agent
bisimulation linking them.
\end{theorem}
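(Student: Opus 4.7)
The plan is to mirror the proof of Theorem~\ref{thm:PIL-bisim}, now with each operator indexed by an agent. First I would establish a multi-agent analogue of Lemma~\ref{lemma:equiexpressive}: the translation defined there carries over clause-by-clause once each $\Kv$ is tagged with an agent $i$, justified by the per-agent \DET, \DIST\ and \COM\ laws. This reduces the problem to characterizing equivalence for the flat fragment $\mathcal{L}^I_2$ whose atoms are $\Kv_i(C,D)$; using $\Kv_i(C,D) \equiv \bigwedge_{d \in D} \Kv_i(C,\{d\})$ it further suffices to consider atoms with singleton second arguments.

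For the direction \emph{bisimilar implies logically equivalent}, I would proceed by induction on the translated formula. The boolean cases are routine; the key case is $\neg \Kv_i(C,d)$, whose semantic unfolding ``there is a $t$ with $s \sim_i t$, $s =_C t$ and $s \neq_d t$'' is precisely the premise of clause~(i) of the multi-agent bisimulation. A witness on one side thus produces a witness on the other, with clause~(ii) handling the reverse. Note that the $tZt'$ part of the bisimulation definition is not even needed in this direction, since $\mathcal{L}^I_2$ has no nested epistemic modalities and $t$ is never itself used as an evaluation point.

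For the converse, the natural choice is $Z := \{(u, u') \mid (\M,u) \text{ and } (\M',u') \text{ satisfy the same formulas of } \mathcal{L}^I_1\}$, which relates $s$ and $s'$ by hypothesis, and I would verify that $Z$ is a multi-agent bisimulation. Given $uZu'$ and a triple $(t,i,C,d)$ witnessing $\neg \Kv_i(C,d)$ at $u$, the same atomic formula holds at $u'$ and so some $t' \in S'$ satisfies $u' \sim_i t'$, $u' =_C t'$ and $u' \neq_d t'$. The main obstacle, and really the only non-routine part of the argument, is to choose such a $t'$ that additionally shares the theory of $t$, i.e.\ satisfies $tZt'$. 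I would handle this in the standard Hennessy--Milner fashion, either by restricting attention to image-finite $\sim_i$-classes or by passing to a modal-saturation / ultrafilter-extension of the models: modulo theory equivalence only finitely many classes of candidate $t'$ remain, and any $\mathcal{L}^I_2$-atom distinguishing all of them from $t$ would combine into a formula witnessing $\neg \Kv_i(C,d)$ with extra conditions at $u$ but not at $u'$, contradicting $uZu'$. Clause~(ii) then follows by symmetry.
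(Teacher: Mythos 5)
Your reduction to the flat fragment and your proof of the direction from bisimilarity to logical equivalence are fine and match what the paper intends (the paper omits this proof, referring back to Theorem~\ref{thm:PIL-bisim}, whose converse is immediate because there the bisimulation only constrains the designated pair). You have also correctly located the one place where the multi-agent case is not a verbatim copy: the conjunct $tZt'$ in clause~(i). The gap is in your repair of that step. The Hennessy--Milner argument you invoke needs to push the finitely many distinguishing formulas of the candidate witnesses \emph{under} the existential quantification over $\sim_i$-successors, i.e.\ it needs a formula saying ``agent $i$ considers possible a world that agrees on $C$, differs on $d$, and satisfies $\psi$''. Since $\mathcal{L}^I_1$ has no nested epistemic modalities and $[c]$ cannot be relativized by an arbitrary $\psi$, no such formula exists, so the contradiction you aim for cannot be derived. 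Image-finiteness does not help: take $\C=\{c\}$, agents $1,2$, and $\M$ with states $s,t,v$ of $c$-values $0,1,2$, with $\sim_1$-classes $\{s,t\},\{v\}$ and $\sim_2$-classes $\{s\},\{t,v\}$, versus $\M'$ with states $s',t'$ of $c$-values $0,1$, with $\sim_1$-class $\{s',t'\}$ and $\sim_2$-classes $\{s'\},\{t'\}$. Over $\C=\{c\}$ every formula is equivalent to a boolean combination of $\Kv_1(c)$ and $\Kv_2(c)$ (by the translation and \LS, all other atoms $\Kv_i(C,D)$ are trivial), and both $s$ and $s'$ satisfy $\lnot\Kv_1(c)\land\Kv_2(c)$, so the two pointed models satisfy the same formulas. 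Yet any bisimulation containing $(s,s')$ is forced to contain $(t,t')$ (for agent $1$, $t'$ is the only $\sim_1$-successor of $s'$ with a different $c$-value), and clause~(i) then fails at $(t,t')$ for agent $2$, because $t$ has the $\sim_2$-successor $v$ with a different $c$-value while $t'$ is $\sim_2$-isolated. Hence with $tZt'$ read literally the left-to-right direction is false, and no saturation hypothesis can rescue your sketch.

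The way to a correct proof is the opposite of strengthening the hypotheses: drop the requirement $tZt'$, so that clauses~(i) and~(ii) effectively only constrain the designated pair. This is exactly what the single-agent Definition~\ref{def:single-bisim} does, and the paper's own remark that linking non-actual worlds ``would be too strong'' already signals it. With that reading, $\{(s,s')\}$ is a bisimulation iff $s$ and $s'$ agree on all atoms $\lnot\Kv_i(C,d)$, because clause~(i) is then a literal restatement of the semantics of $\lnot\Kv_i(C,d)$; both directions of the theorem follow in one line, just as in Theorem~\ref{thm:PIL-bisim}. As you yourself observed, the conjunct $tZt'$ plays no role in the sound direction either, so nothing is lost by removing it.
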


As it is very similar to the one of Theorem \ref{thm:PIL-bisim},
we omit the proof here.

\section{Future Work}

Between our specific approach and the general language of
\cite{Baltag2016:KVV}, a lot can still be explored. An advantage
of having a weaker language with explicit operators, instead of
encoding them in a more general language, is that we can clearly see
the properties of those operators showing up as intuitive axioms.

The framework can be extended in different directions. We could for
example add equalities $c=d$ to the language, together with knowledge
$\K(c=d)$ and announcement $[c=d]$. No changes to the models are
needed, but axiomatizing these operators seems not straightforward.
Alternatively, just like Plaza added $\Kv$ to PAL, we can also add
$\K$ to PIL. Another next language to be studied is thus $\PIL + \K$
from Table \ref{table:LanguageComparison} above and given by
$$\varphi ::= \top \mid \lnot\varphi \mid \varphi\land\varphi \mid \Kv_i c \mid \K_i \varphi \mid [c]\varphi.$$
Note that in this language, we can also express \emph{knowledge of}
dependency in contrast to \textit{de facto} dependency.
For example, $\K_i [c]\Kv_i d$ expresses that agent $i$ knows that
$d$ functionally depends on $c$, while $[c]\Kv_i d$ express that the
value of $d$ (given the information state of $i$) is determined by
the \textit{actual value} of $c$ \textit{de facto}. In particular the
latter does not imply that $i$ knows this. The agent can still
consider other values of $c$ possible that would not determine the
value of $d$.
To see the difference technically, we can spell out the truth
condition for $\K_i[c]\Kv_i(d)$ under standard Kripke semantics
for $\K_i$ on S5 models:
\[ \M,s\vDash \K_i[c]\Kv_i(d) \Leftrightarrow \text{ for all }t_1\sim_is, t_2\sim_i s: t_1=_{c} t_2 \implies  t_1=_{d} t_2\\ \]
Now consider Example~\ref{ex:pointed-difference}: $[c]\Kv(d)$ holds
in the first row, but $\K[c]\Kv(d)$ does not hold since the semantics
of $\K$ require $[c]\Kv(d)$ to hold at \textit{all} worlds considered
possible by the agent. This also shows that $[c]\Kv(d)$ is not
positively introspective (i.e.~the formula $ [c]\Kv(d)\to \K_i [c]\Kv(d)$
is not valid), and it is essentially not a subjective epistemic formula.

In this way, $\K [c]\Kv(d)$ can also be viewed as the atomic formula
$=\!\!(c, d)$ in \textit{dependence logic} (DL) from \cite{Depbook}.
A \textit{team model} of DL can be viewed as the set of epistemically
accessible worlds, i.e., a single-agent model in our case.
The connection with dependence logic also brings PIL closer to
the first-order variant of \textit{epistemic inquisitive logic} by
\cite{CiardelliR15}, where knowledge of entailment of interrogatives
can also be viewed as the knowledge of dependency. For a detailed
comparison with our approach, see \cite[Sec. 6.7.4]{Ivano16}.

Another approach is to make the dependency more explicit and include
functions in the syntax. In \cite{Ding15thesis} a functional
dependency operator $\Kf_i$ is added to the epistemic language with
$\Kv_i$ operators: $\Kf_i(c, d):= \exists f \K_i(d=f(c))$ where $f$
ranges over a pool of functions.

Finally, there is an independent but related line of work on
(in)dependency of variables using predicates, see for example
\cite{MoreN10,Naumov12,NaumovN14,HarNau13:FunDepStratG}.
In particular, \cite{NaumovN14} also uses a notion of dependency as
an epistemic implication ``Knowing c implies knowing d.'',
similar to our formula $\Kv(c,d)$.
In \cite{HarNau13:FunDepStratG} also a ``dependency graph'' is used
to describe how different variables, in this case payoff functions
in strategic games, may depend on each other.
Note however, that these graphs are not the same as our canonical
dependency graphs from Definition \ref{def:canonical-g-and-m}.
Our graphs are directed and describe determination between sets of
variables. In contrast, the graphs in \cite{HarNau13:FunDepStratG}
are undirected and consist of singleton nodes for each player in a
game.
We leave a more detailed comparison for a future occasion.

\subsubsection{Acknowledgements.}
We thank the following people for useful comments on this work:
Alexandru Baltag, Peter van Emde Boas, Hans van Ditmarsch,
Jie Fan, Kai Li and our anonymous reviewers.

\bibliographystyle{splncs}
\bibliography{PIL}

\end{document}